\newmdenv[
backgroundcolor=gray!15,
topline=false,
bottomline=false,
skipabove=\topsep,
skipbelow=\topsep,
leftmargin=-5pt,
rightmargin=-5pt,
innertopmargin=3pt,
innerbottommargin=3pt
]{siderules}
\title{On Optimal $w$-gons in Convex Polygons}
\author{Vahideh Keikha}
	\institute{Institute of Computer Science, the Czech Academy of Sciences, Czech Republic, \email{keikha@cs.cas.cz}
	}
\begin{document}

\maketitle

\begin{abstract}
	Let $P$ be a set of $n$ points in $\mathbb{R}^2$. For a given positive integer $w<n$, our objective is to find a set $C \subset  P$ of points, such that $CH(P\setminus C)$ has the smallest number of vertices and $C$ has at most $n-w$ points. We discuss the $O(wn^3)$ time dynamic programming algorithm for monotone decomposable functions (MDF) introduced for finding a class of optimal convex $w$-gons, with vertices chosen from $P$, and improve it to $O(n^3 \log w)$ time, which gives an improvement to the existing algorithm for MDFs if their input is a convex polygon. 
\end{abstract}

\section{Introduction}
Let $P$ be a set of $n$ points in $\mathbb{R}^2$. 
We study a new variant of outlier detection for convex hull: Given $P$ and a positive integer $w<n$, find a subset $C \subset  P$ of points as the {\em outliers}, such that $CH(P\setminus C)$ has the smallest number of vertices, and $0\le |C|\le n-w$, where $|C|$ denotes the cardinality of $C$; see Figure~\ref{fig:def}. 
One of the motivations is reducing the input sample size to $w$ for support vector machine (SVM) classifiers~\cite{wang2013online}. 

\begin{definition}[Min-Size Convex Hull with Outliers (MinCH)]\label{def:problem}
	Let $P=\{p_1,\ldots,p_n\}$ be a set of $n$ points. For a positive integer $w$, we are interested in computing
	$\underset{\substack{C\subset P,\\|C| \le n-w}}{\arg \min}\; |CH(P\setminus C)|$.  Let $P^*_v$ denote $CH(P\setminus C)$.  
\end{definition}

\begin{figure}[h]
	\centering
	\includegraphics[scale=0.8]{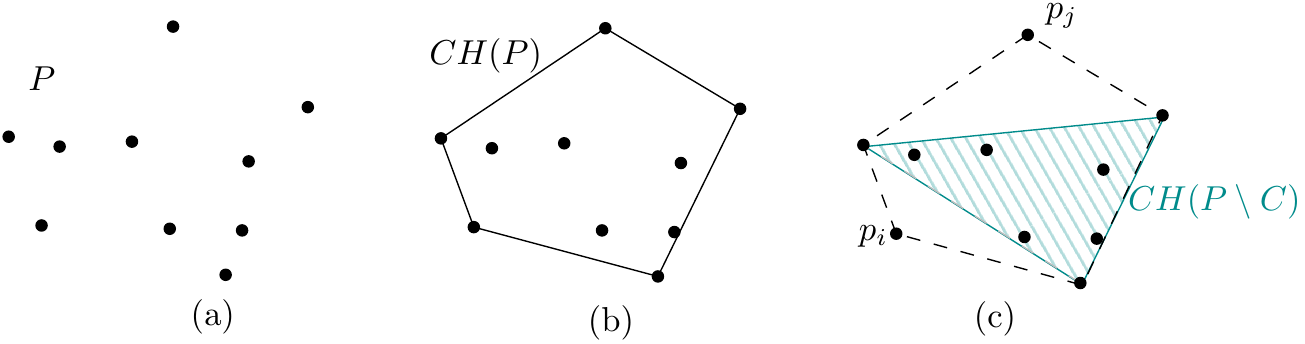}
	\caption{Problem definition: (a) A set $P$ of points with $n-w=2$;  (b) $CH(P)$; (c) $CH(P \setminus C)$ with $C=\{p_i,p_j\}$; the objective function is minimizing the number of vertices of $CH(P \setminus C)$. }
	\label{fig:def}
\end{figure}

A closely related problem is detecting at most $n-w$ points as outliers in $P$, such that after removing these outliers, the convex hull of the remaining points has the smallest area/perimeter. This problem can be solved in $O(n \log n)$ time for a constant number of outliers~\cite{atanassov2009algorithms}. 

Since we improve the algorithm of~\cite{eppstein1992finding} for computing a minimum area $w$-gon,  we also define this problem in the following.


\begin{definition}[Min-area $w$-gon~\cite{eppstein1992finding}]\label{def:problem}
	Let $P=\{p_1,\ldots,p_n\}$ be a set of $n$ points. For a positive integer $w<n$, we are interested in computing
	 a $w$-gon $P^*_a$ with vertices chosen from $P$, such that   $P^*_a$ has the smallest possible area among all choices. 
\end{definition}

\begin{figure}[t]
	\centering
	\includegraphics[scale=0.8]{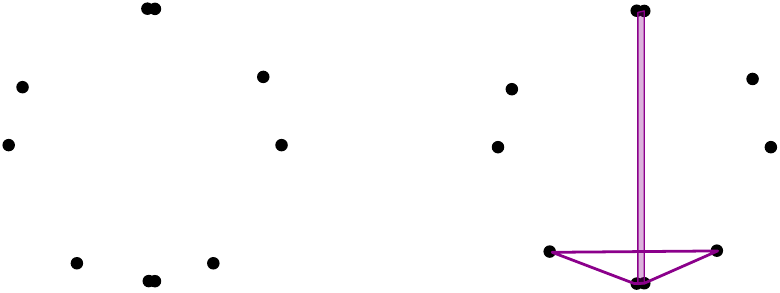}
	\caption{Problem definition on a set $P$ of 10 points: For $w=4$, the narrow rectangle is the solution to the Min-area, and the trapezoid is the solution to the Min-perimeter problem. }
	\label{fig:defarea}
\end{figure}

Min-perimeter $w$-gon would be defined analogously. See \Cref{fig:defarea} for an illustration.  In~\cite{eppstein1992finding}, a dynamic programming algorithm with $O(wn^3)$ time and $O(wn^2)$ space is given for Min-area $w$-gon, that carries over to solve any problem with an objective function in form of a {\em monotone decomposable 
	function (MDF)}, i.e., the objective function can be computed incrementally, but needs to change monotonicity increasing or decreasing:

\begin{definition}
	[Monotone Decomposable Function~\cite{eppstein1992finding}]\label{def:decomposable}
	
	A weight function $W$ is called monotone decomposable if and only if for any polygon $P$ and any index~$2<i<m$
	\begin{align*}
		W(P)= M (W(\langle p_1,\ldots,p_i\rangle),W(\langle p_1,p_i,p_{i+1}, \ldots, p_m\rangle),p_1,p_i)
	\end{align*}
	where $M$ can be computed in constant time, and $M$ is monotone in its first argument $W(\langle p_1,\ldots,p_i\rangle)$, which also means it is monotone in its second argument.
	
\end{definition}

So, computing the minimum/maximum perimeter $w$-gon, computing an optimal empty $w$-gon or computing an optimal polygon with $w$ as the number of vertices plus the interior points can be solved with the same technique.

Finding a possibly non-convex polygon of size $w$ with minimum/maximum area is known as {\em polygonalization}, and is NP-complete even in $\mathbb{R}^2$~\cite{fekete2000simple}. We note that it is not always possible  to find a strictly convex $w$-gon on any point set~\cite{mitchell1995counting}. 

If all the points of $P$ are on $CH(P)$, i.e. for points in convex position, computing the maximum area/perimeter $k$-gon takes $O(kn+n\log n)$ time~\cite{van2020maximum,boyce1985finding,aggarwal1987geometric}, and computing the minimum perimeter $k$-gon takes $O(n \log n+k^4 n)$ time~\cite{dobkin1983finding,aggarwal1991finding}.

The general idea of~\cite{eppstein1992finding} is to consider all optimal convex $(k-1)$-gons that can be extended by adding one more ear.
Presented algorithm uses several preprocessing on the points, e.g.,  
the clockwise ordering of all the vertices around each input point can be computed in $O(n^2)$ time~\cite{edelsbrunner1989topologically}.

 Another related problem is tabulating the number of convex $w$-gons for $w=3,\ldots,m$, that takes $O(mn^3)$ time~\cite{mitchell1995counting}.

We refer to~\cite{parhami2006introduction} for the standard models and definitions of distributed computations in the exclusive read exclusive write parallel random-access machine (EREW PRAM). 

\subsubsection*{Contribution}
 We improve the $O(wn^3)$ time dynamic programming algorithm of~\cite{eppstein1992finding} to $O(n^3 \log w)$, that improves the running time of all monotone decomposable weight functions if their input is a convex polygon (\Cref{sec:mdwf,sec:extension}). 
Our technique can also improve the running time of the dynamic programming algorithms in \cite{mitchell1995counting}. 
We discuss an easy extension of the algorithm for the class of  monotone decomposable weight functions (with convex input) to parallel settings (\Cref{sec:crew}).   \Cref{table:results} gives a summary of the new and known results.

\begin{table}[t]
	\centering
	\begin{tabular}{|p{2.5cm}|c|c|c|c|c|c|}
		\hline
		Measure &$\#$ Excluded pts. & Time & Space & Apprx. & Ref\\
		\hline\hline
		Min $\mathcal{A}/\mathcal{P}$  & $n-w$ & $O(wn^3)$&  $O(wn^2)$ & exact &\cite{eppstein1992finding} \\
		Min $\mathcal{A}/\mathcal{P}$ & $O(1)$ & $O(n \log n)$&  $O(n)$ & exact &\cite{atanassov2009algorithms} \\
		Min  $\#$ of interior points& $n-w$ & $O(wn^3)$&  $O(wn^2)$ & exact &\cite{eppstein1992finding} \\
		MDF & $n-w$ & $O(wn^3+G(n))$&  $O(wn^2)$ & exact &\cite{eppstein1992finding} \\
		MDF of convex polygons & $n-w$ & $O(n^3\log w)$&  $O(n^3\log w)$ & exact & Thm. \ref{thm:dp}\\
		Min	  $\#$ of CH vertices & $n-w$ & $O(n^3 \log w)$&  $O(n^3\log w)$ & exact &Thm. \ref{thm:dp} \\
		Min $\mathcal{A}/\mathcal{P}$ & $n-w$ & $O(n^3\log w)$ &  $O(n^3\log w)$ & exact &Thm. \ref{thm:dp} \\
		\hline
	\end{tabular}
	\caption{New and known results. $\mathcal{A}$ and  $\mathcal{P}$ stand for the area and the perimeter, respectively. $G(n)$ is the time complexity of computing the decomposable function on at most $O(n^3)$ triangles.}
	\label{table:results}
\end{table}

\section{The Dynamic Programming of~\cite{eppstein1992finding}}  \label{ap:eppalg}  
Since the original algorithm in~\cite{eppstein1992finding} is for minimizing the area of a $w$-gon as the weight function, we recall it with the original description. 
For two points $p_i$ and $p_j$, let $H_{i,j}$ denote the half-plane constructed to the right of the directed line passing through the line segment $\overrightarrow{p_ip_j}$. 

\textbf{Preprocess: Angular sort around points} For any point $p_i$, we first compute the sorted order of $P\setminus\{p_i\}$ in counter-clockwise (CCW) order around $p_i$ and denote it by $\phi(p_i)$.

The area of a triangle $p_ip_jp_l$ is defined as  $area(p_ip_jp_l)=|\overrightarrow{p_ip_l}\times \overrightarrow{p_ip_j} \sin \widehat {p_jp_ip_l}|$. 
The authors make a $4$-dimensional table $A$. Formally,  $A[p_i,p_j,p_l,m]$ denote the minimum area (at most) $m$-gon with $p_i$ as the bottommost vertex, $p_j$ as the next vertex of $p_i$ in  counter-clockwise order, and all the vertices lie on the same side of $p_jp_l$ as $p_i$. 
In the initialization, $A[*,*,*,2]=2$ ($*$ means any point in $P$), and they start the recursion from $A[*,*,*,3]$. 

In each iteration, the following cases should be considered. If $p_l \in H_{i,j}$, we cannot add a new vertex, and $A[p_i,p_j,p_l,m]=  A[p_i,p_l,pred(p_l),m]$, where $pred(p_l)$, is the last processed vertex in $\phi(p_j)$. 
If $p_l \notin H_{i,j}$, 
we may add one more vertex. So we have
$$\displaystyle A[p_i,p_j,p_l,m]=\min_{p_l \in P\setminus H_{i,j}} \big(A[p_i,p_l,p_j,m-1]+area(p_ip_jp_l),A[p_i,p_j,pred(p_l),m]\big),$$  
where $p_l$ is the first unprocessed predecessor neighbour of $p_j$ in $\phi(p_j)$; see Figure~\ref{fig:DP}. 
In the first term, $p_l$ is considered as a new vertex, while in the second term, omitting $p_l$ is more optimal.   
The total number of considered vertices in $P^*_a$ at most equals $w$, so we recurse from $3,\ldots,w$. A solution to $P^*_a$ is the minimum of $A[*,*,*,w]$. This algorithm is outlined in~\Cref{alg:DParea} and runs in $O(wn^3)$ time. 

\begin{figure}[t]
	\centering
	\includegraphics[scale=0.8]{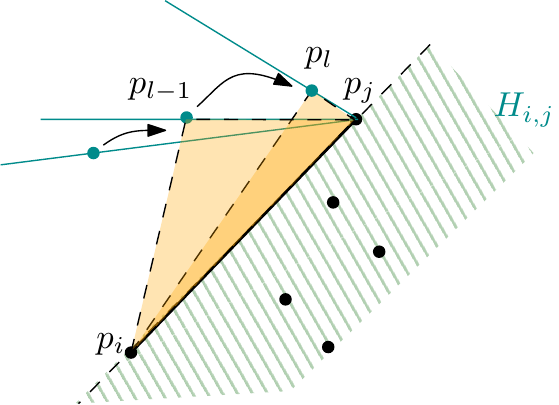}
	\caption{For any candidate $m$-gon with a fixed $p_i,p_j$, we check whether $p_l$ can be considered as a vertex (not lying on $H_{i,j}$), and also which of $p_l$ or $p_{l-1}$ determines the optimal choice.}
	\label{fig:DP}
\end{figure}

\begin{algorithm}[h]
	\caption{Min-area $w$-gon~\cite{eppstein1992finding}}
	\label{alg:DParea}
	\begin{algorithmic}[1]
		\Require{A set $P$ of points, $w>0$}
		\Ensure{A Min-area $w$-gon of $P$}
		\For{$p_i \in P$}
		\State{$A[p_i,p_l,p_j,2]=2,\quad \forall (p_j,p_l)\in P$}
		\For{$m=3,\ldots,w$}
		\For{$p_j\in P$, $p_l\in P$: $\overrightarrow{p_ip_l}\times \overrightarrow{p_ip_j} \leq0$} 
		\State{$v=\min_{p_l \in P\setminus H_{i,j}} A[p_i,p_l,p_j,m-1]+area(p_ip_jp_l)$}\label{line:cp}
		\EndFor
		\State{$\displaystyle A[p_i,p_j,p_l,m]=v$}
		\EndFor
		\EndFor\\
		\Return{$\min_{i,j,l} A[p_i,p_j,p_l,w]$}
	\end{algorithmic}
\end{algorithm}

\section{Exact Algorithms for MinCH} \label{sec:MainDP}
As $P^*_v$ can be decomposed into a triangle and a $(w-1)$-gon with the optimal number of vertices, we can find $P^*_v$ recursively. We use this idea combined with a divide-and-conquer technique to improve the running time of~\cite{eppstein1992finding}. We first prove that MinCH is an MDF. 

\begin{lemma} MinCH is a monotone decomposable function.
	\end{lemma}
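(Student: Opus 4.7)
The plan is to exhibit the natural weight function underlying MinCH and then verify the three requirements in the MDF definition: existence of a constant-time combining operator $M$, an identity expressing $W$ of a polygon in terms of $W$ of the two pieces of a diagonal decomposition, and monotonicity of $M$ in each argument.

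For a convex polygon $Q=\langle p_1,\ldots,p_m\rangle$ with vertices drawn from $P$, I would set
\[
W(Q)\;=\;|V(Q)|\,+\,|P\cap\mathrm{int}(Q)|,
\]
i.e.\ the number of vertices plus the number of input points strictly inside. This is the right quantity for MinCH, since (as noted just before the lemma) $P^*_v$ is precisely the convex $m$-gon with smallest $m$ for which some $m$-gon achieves $W\ge w$; equivalently, the discarded outliers are exactly the points of $P$ lying outside $P^*_v$.

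Next, fix any diagonal $p_1p_i$ with $2<i<m$ and write $Q_1=\langle p_1,\ldots,p_i\rangle$ and $Q_2=\langle p_1,p_i,p_{i+1},\ldots,p_m\rangle$. Because $p_1$ and $p_i$ are the only shared vertices, $|V(Q_1)|+|V(Q_2)|=|V(Q)|+2$. Under a general-position assumption (or a lexicographic tiebreak for any point that happens to lie on the open segment $p_1p_i$), the open interiors of $Q_1$ and $Q_2$ partition $\mathrm{int}(Q)$ up to a $P$-free segment, so $|P\cap\mathrm{int}(Q)|=|P\cap\mathrm{int}(Q_1)|+|P\cap\mathrm{int}(Q_2)|$. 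Summing these two identities yields
\[
W(Q)\;=\;W(Q_1)+W(Q_2)-2,
\]
so $M(a,b,p_1,p_i):=a+b-2$ is computable in $O(1)$ time and strictly increasing in each of its numeric arguments, exactly as the MDF definition demands.

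The only subtle point is the tiebreaking for a $P$-point lying exactly on a diagonal $p_1p_i$; this is routine, and once it is fixed the remaining argument is just the two additive accounting identities above. Hence MinCH is a monotone decomposable function, and both Eppstein's original MDF dynamic program and the improved divide-and-conquer version developed later in the paper are available to solve it.
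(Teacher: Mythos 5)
Your proof is correct, and its core is the same as the paper's: cut the polygon along a chord $p_1p_i$, note that only the two chord endpoints are counted in both pieces, and combine with $M(a,b,p_1,p_i)=a+b-2$, which is constant-time and monotone increasing in each argument. The genuine difference is \emph{which} function you decompose. The paper takes the weight to be the bare vertex count $|CH(\cdot)|$ (the quantity MinCH minimizes) and verifies $|CH(Q)|=|CH(Q_1)|+|CH(Q_2)|-2$; you take the weight to be the vertex count plus the number of points of $P$ strictly inside, i.e.\ the total number of captured points, and verify the same identity because the interior-point term splits additively across the chord (under the same general-position/tie-breaking caveat the paper handles by citing the no-three-collinear assumption). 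Both functions are MDFs with the identical combinator, so the lemma follows either way. Your choice is arguably the more informative one for the problem as stated: MinCH carries the cardinality constraint $|P\setminus C|\ge w$, which the dynamic program can only enforce by tracking exactly your captured-point count, whereas the vertex count already appears as the DP's stage index; the paper's proof addresses only the objective's decomposability, while yours establishes the constraint side and recovers the paper's identity as the sub-step $|V(Q_1)|+|V(Q_2)|=|V(Q)|+2$. The only thing to tighten is the justification of the sentence claiming your $W$ is ``the right quantity'' — that equivalence (outliers are exactly the points outside $P^*_v$) deserves one explicit line, since a priori $C$ could contain points interior to $CH(P\setminus C)$ and one must argue such a choice is never better.
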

\begin{proof}
	First observe that one can
cut $P^*_v$ by any chord $p_1p_i$, where $p_1,p_i \in P^*_v$, and summing over the number of vertices 
of the induced sub-polygons minus 2 (using the general position assumption in~\cite{eppstein1992finding} that no three points are collinear) determines the number of vertices of $P^*_v$. Hence, $$M(|CH(p_1,\ldots,p_i)|,|CH(p_1,p_i,p_{i+1},\ldots,p_w)|,p_1,p_i)=$$
$$|CH(p_1,\ldots,p_i)|+|CH(p_1,p_i,p_{i+1},\ldots,p_w)|-2=w.$$
\end{proof}

\subsection{Adjusting the Dynamic Programming of~\cite{eppstein1992finding} to MinCH} \label{ap:minchrote}
For adjusting~\Cref{alg:DParea} for minimizing the number of vertices of the convex hull, the recursion would be $A[p_i,p_j,p_l,m]= \underset {p_l \notin H_{i,j}} {\min}\big( A[p_i,p_l,p_j,m-1]+1,A[p_i,p_j,pred(p_l),m]\big),$
where $p_l$ is the first unprocessed predecessor neighbour of $p_j$ in $\phi(p_j)$. 
In the first term, $p_l$ is considered as a new vertex, while in the second term, omitting $p_l$ is more optimal.



\begin{algorithm}[t]
	\caption{Improved Min-area $w$-gon }
	\label{alg:betterminch}
	\begin{algorithmic}[1]
		\Require{A set $P$ of points, $w>0$}
		\Ensure{A Min-area $w$-gon of $P$}
		\State{$\phi(p_i)=$the sorted list of $p_j, j\ne i$ in CCW  direction around $p_i, \forall p_i\in P$}
		\State{$Cost[i][j][l]=\infty,~End[i][j]= Nil, \forall i\ne j$ ($Nil$: nothing in list)}
		\State{$Cost[i][i][l]=0,~End[i][i]= i$}
		\For{$p_i \in P$}
		\For{$w=2^t, t=0,\ldots,\log_2 w$}
		\For{$p_j$ in the order of $\phi(p_i),j\ne i$}
		\For{$p_r$ in the order of $\phi(p_j),r\ne i,j$ and $p_ip_jp_r$ is CCW}
		\State{$l=End[i][r]$}
		\If{$Cost[i][j][l]> Cost[i][j][r]+ Cost[i][r][l]$} \label{line:costdef}
		\State{$Cost[i][j][l]= Cost[i][j][r]+ Cost[i][r][l]$} \label{line:computeweight}
		\State{$End[i][j]=l$}
		\EndIf
		\EndFor
		\EndFor
		\EndFor
		\EndFor \\
		\Return{$\min_{i,j,l} Cost[i][j][l]$}
	\end{algorithmic}
\end{algorithm}

\subsection{An $O(n^3 \log w)$ Time Algorithm} \label{sec:mdwf}

As we give an improvement for the dynamic programming algorithm in~\cite{eppstein1992finding}, and the original algorithm in~\cite{eppstein1992finding} is formulated to find a Min-area $w$-gon  (but it is shown that it works for the broader class MDF), we first adopt our explanation based on finding a $w$-gon of smallest area, and then we discuss the generalizations.

Our idea is that we do not recurs on $w$, and instead, we design a divide-and-conquer algorithm, at which we merge two convex polygons of size $2^{t}$ vertices in each iteration, for $t=0,\ldots,\log w$. Then we merge the constructed sub-polygons.
Each sub-polygon is constructed based on the dynamic programming algorithm of~\cite{eppstein1992finding}.

For ease of exposition, let $w=2^t$ for some $t>0$. 
We define a 3-dimensional array $Cost[i][j][r]$, that stores the optimal objective function on a convex polygon with the lowest vertex $p_i$, and with $p_j$ as the next vertex in CCW direction, and $p_r$ is the first unprocessed predecessor neighbour of $p_j$ in $\phi(p_j)$.

In Line~\ref{line:computeweight} of \Cref{alg:betterminch}, $Cost[i][j][l]$ achieves the optimal objective function resulted by merging two smaller polygons stored in $Cost[i][j][r]$ and $Cost[i][r][l]$, each with half the number of vertices of $Cost[i][j][l]$.  
We kept the index $l$ at another array $End[i][j]$, which we would know the last vertex of the constructed convex chain on $p_i$ and $p_r$ so far is $p_l$. 
We remove this assumption of~\cite{eppstein1992finding} that all the vertices lie on the same side of the supporting line of $p_jp_l$ as $p_i$, that is because we merge two convex polygons of $t\ge 3$ vertices instead of merging a polygon with a triangle. Keeping the convexity constraints at the vertex ($p_r$) we perform the merging prevents making non-feasible solutions.  The initialization rules of the tables are straightforward. 

In Min-area $w$-gon problem, we keep the area of the constructed sub-polygons in $Cost[i][j][l]$, and return $\min_{i,j,l} Cost[i][j][l]$ at the end. 

\subsubsection{Correctness}  Observe that we consider all possible configurations for which three vertices $p_i,p_j$ and $p_l$ define an optimal convex polygon, 
where $p_i$ is the lowest vertex, $p_j$ is the next vertex of $p_i$ in CCW and $p_l$ is the vertex immediately before $p_i$ in CCW, and the constructed angle at $\widehat{ p_ip_jp_r}$ is CCW (i.e., the inner angle at $p_r$ is convex), where $p_r$ is the vertex we concatenate two small polygons and make a polygon with $p_ip_r$ as a chord. 

In Algorithm~\ref{alg:betterminch}, throughout $O(\log w)$ steps, 
we compute the optimal $w$-gon (among all candidates) by concatenating two optimal convex polygons each of $2^t$ vertices for $t=1,\ldots,\log w$.  Optimality of the concatenated sub-polygons and holding the convexity constraint at the concatenating vertex $p_r$ guarantee the correctness and the optimality of the reported solution. (Note that the reported polygon has indeed $2w$ vertices, but the adjustment is straightforward.)

\subsubsection{Extension to MinCH} \label{sec:minch_ex}
In MinCH, $Cost[i][j][l]$ keeps the number of vertices, and we return $\min_{i,j,l} Cost[i][j][l]$. The initialization rules of the tables are straightforward. 

\subsubsection*{Extension to Other Monotone Decomposable Functions} \label{sec:extension}
For computing Min-perimeter $w$-gon, we keep the perimeter in $Cost[i][j][l]$, but in line \ref{line:computeweight}, we also remove the weight of the edge $p_ip_r$ multiplied by 2 from $Cost[i][j][l]$. We note that one may need to alter the condition of Line \ref{line:costdef} of Algorithm~\ref{alg:betterminch} proportional to the minimization or maximization criterion. 

We add the convexity condition to \Cref{def:decomposable} to make the 
extension to the other weight functions (if the weight function is convex decomposable):

\begin{definition}
	[Convex Decomposable Function]\label{def:cdecomposable}
	
	A weight function $W$ is called convex decomposable if and only if for any convex polygon $P=\langle p_1,\ldots,p_m\rangle$ and any index $2<i<m$
	\begin{align*}
		W(P)= M (W(\langle p_1,\ldots,p_i\rangle),W(\langle p_1,p_i,p_{i+1}, \ldots, p_m, p_1\rangle))
	\end{align*}
	where $M$ can be computed in constant time, and $M$ is semigroup in its first argument $W(\langle p_1,\ldots,p_i\rangle)$, which also means it is semigroup in its second argument.
	
\end{definition}

\begin{theorem} \label{thm:dp} Let $P$ be a set of $n$ points, and $0 \le w<n$ be an integer. An optimal solution to the MinCH, Min-area $w$-gon, Min-perimeter $w$-gon and any convex decomposable function can be computed in $O(n^3\log w)$ time and $O(n^3 \log w)$ space. 
\end{theorem}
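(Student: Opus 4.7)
The plan is to organize the proof around three components: (i) show that each of the three named objectives satisfies the convex decomposability of \Cref{def:cdecomposable}; (ii) establish the correctness of \Cref{alg:betterminch} by induction on its divide-and-conquer level; and (iii) carry out a routine loop-counting argument for the time and space bounds. For (i), MinCH was just shown to be monotone decomposable and, restricted to convex inputs, is trivially convex decomposable with $M(x,y)=x+y-2$. Area is additive across a chord, giving $M(x,y)=x+y$. Perimeter admits $M(x,y)=x+y-2\|p_1p_i\|$ to compensate for the chord's double counting, as indicated in \Cref{sec:extension}. In every case $M$ is computable in constant time and is a semigroup in each argument.

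For (ii), I would prove by induction on $t=1,\ldots,\lceil\log_2 w\rceil$ that after the $t$-th pass of the outer loop the entry $Cost[i][j][l]$ equals the optimal weight among all convex sub-polygons of at most $2^t$ vertices of $P$ whose bottommost vertex is $p_i$, whose CCW-successor of $p_i$ is $p_j$, and whose CCW-predecessor of $p_i$ is $p_l$. The base cases are handled by the initialization of $Cost$ and $End$ together with the degenerate $1$- and $2$-vertex polygons. For the inductive step, fix an optimal such polygon $Q$ of size $\le 2^{t+1}$; by a pigeonhole argument there exists a vertex $p_r\in Q$ for which the chord $p_ip_r$ splits $Q$ into two convex sub-polygons $Q_1,Q_2$, each of at most $2^t$ vertices. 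Convex decomposability of $W$ together with the semigroup property of $M$ yields $W(Q)=M(W(Q_1),W(Q_2))$; the inductive hypothesis identifies $W(Q_1)=Cost[i][j][r]$ and $W(Q_2)=Cost[i][r][l]$, and the inner loop enumerates every feasible $p_r$. The CCW test on $p_ip_jp_r$ keeps the concatenation convex at the shared vertex $p_r$, so no infeasible merge is taken.

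For (iii), the four nested loops over $p_i$, the level $t$, $p_j$, and $p_r$ contribute $O(n\cdot\log w\cdot n\cdot n)$, with constant work per iteration once the angular orderings $\phi(p_i)$ are precomputed in $O(n^2)$ total time \cite{edelsbrunner1989topologically}, giving the $O(n^3\log w)$ time bound. For space, because the merge at level $t+1$ consumes the optimal values from level $t$ while the table is overwritten with polygons of strictly larger size, one maintains a separate $n\times n\times n$ copy of the $Cost$/$End$ tables at each of the $\lceil\log_2 w\rceil$ levels, for a total of $O(n^3\log w)$.

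The main obstacle I foresee is verifying that restricting to balanced splits loses no optimality: one must ensure that every convex polygon of size $\le 2^{t+1}$ can be cut by some chord from $p_i$ into two convex pieces of size $\le 2^t$ each, and that the algorithm's loop structure enumerates such a split. The pigeonhole observation above settles the existence (shifting the chord endpoint around the polygon by one vertex changes the side sizes by one, so a balanced choice exists), while iterating $p_r$ through $\phi(p_j)$ covers every legal splitting vertex. The extension from powers of two to arbitrary $w$ is handled by running the outer loop up to $\lceil\log_2 w\rceil$ and exploiting the ``at most $2^t$'' slack in the invariant, so that the final answer is read off as $\min_{i,j,l} Cost[i][j][l]$ from the last level.
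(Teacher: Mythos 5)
Your proposal follows essentially the same route as the paper: the paper's proof of the theorem simply points back to its correctness discussion (merging two optimal convex sub-polygons of $2^t$ vertices along a chord $p_ip_r$ while enforcing convexity at the concatenation vertex $p_r$) and observes that the loops over $p_i,p_j,p_r$ and the $\log w$ levels give $O(n^3\log w)$ time and space, which is exactly your items (ii) and (iii), while your item (i) reproduces the paper's Lemma~1 and the remarks on perimeter in the extension subsection. Your write-up is more explicit than the paper's --- the induction on the level $t$ and the balanced-split pigeonhole argument are left implicit there --- but the decomposition, the key merging step, and the complexity accounting are the same.
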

\begin{proof} The correctness of the algorithm is already discussed. 
	The running time follows from the fact that we do not recurse on $w$ to perform one vertex insertion per iteration; we merge two convex polygons of $2^t$ vertices for $t=0,\ldots,\log w$. 
	As our computations are not in place, obviously $O(n^3 \log w)$ space is required to run the algorithm.   
\end{proof}

We finally note that

\begin{itemize}
	
	\item the weight function and the objective function are not required to be necessarily the same.

	\item
	another generalization is the case where the given $w$ is a bound on a continues measure, e.g., a bound on the area or the perimeter. Then the objective function is to find a convex hull with the smallest number of vertices so that the area or the perimeter is at most $w$. The set of candidate solutions is still the ones computed by~\Cref{alg:betterminch}.
\end{itemize}

\section{Extension to Distributed Environments}\label{sec:crew}\label{sec:MPC}
In this section, we show that any problem with a convex decomposable weight function can be easily transformed to work in parallel settings. Indeed as the particles of the solution are convex, the order of concatenation is not important. 
\Cref{alg:betterminch} in EREW PRAM, using $O(n^2)$ space and $O(n^3)$ processors, works by evaluating any triple $p_ip_jp_k,i\ne j\ne k$ in one processor, which takes $O(\log w)$ time. Merging the sub-problems takes $O(\log n)$ time, using parallel prefix min~\cite{parhami2006introduction}. Computation of $\phi(p_i)$, for $i=1,\ldots,w$ also takes $O(\log n)$ time in this model.
So, the overall time is $O(\log w \log n)$ and the total work is $O(n^3 \log w \log n)$.

The parallel version of \Cref{alg:DParea} (restated from~\cite{eppstein1992finding}, as the algorithm is sequential itself) is analysed similarly, where $w$ values are checked instead of $\log w$ values; resulting in $O(w\log n)$ time and $O(n^3 w \log n)$ work.

What we discussed in this section is not quite work optimal, but any improvement on this gives an improvement to the work of a broad class of problems (belonging to MDFs) in parallel settings. 


\section{Discussion} 
Further improvements to the running time of the problems in the class MDF  or providing a lower bound for the problems in this class remained open. 

\paragraph*{Acknowledgement} The author would like to thank Sepideh Aghamolaei for all the fruitful discussions on the results of this paper.  V.Keikha is supported by
the Czech Science Foundation, grant number GJ19-06792Y, and with institutional support RVO:67985807.

\bibliographystyle{abbrv}

\bibliography{refs}

\end{document}